\def\R{{\mathds R}}
\def\C{{\mathds C}}
\newcommand{\be}{\begin{equation}}
\newcommand{\ee}{\end{equation}}
\newcommand{\bzero}{{\mbox{\boldmath $0$}}}
\newcommand{\bI}{{\mbox{\boldmath $I$}}}
\newcommand{\bz}{{\mbox{\boldmath $z$}}}
\newcommand{\bn}{{\mbox{\boldmath $n$}}}
\newcommand{\bme}{{\mbox{\boldmath $m$}}}
\newcommand{\bv}{{\mbox{\boldmath $v$}}}
\newcommand{\bp}{{\mbox{\boldmath $p$}}}
\newcommand{\bx}{{\mbox{\boldmath $x$}}}
\newcommand{\by}{{\mbox{\boldmath $y$}}}
\newcommand{\bor}{{\mbox{\boldmath $r$}}}
\newcommand{\bt}{{\mbox{\boldmath $t$}}}
\newcommand{\bC}{{\mbox{\boldmath $C$}}}
\newcommand{\bD}{{\mbox{\boldmath $D$}}}
\newcommand{\bS}{{\mbox{\boldmath $S$}}}
\def\cC{\mbox{$\mathcal C$}}
\def\cF{\mbox{$\mathcal F$}}
\newcommand{\test}{\mbox{$
\begin{array}{c}
\stackrel{ \stackrel{\textstyle H_1}{\textstyle >} }{ 
\stackrel{\textstyle \leq \phantom{|^|}\!\!\!\!\!\!}{ \textstyle  H_0} }
\end{array}
$}}
\newtheorem{myteo}{Proposition}
\newcommand*{\point}{\makebox[1ex]{\textbf{$\cdot$}}}%
\begin{document}

\title{A $k$-nearest neighbors approach to the design of radar detectors}

\author[a]{Angelo Coluccia\corref{cor1}}
\ead{angelo.coluccia@unisalento.it}

\author[a]{Alessio Fascista}
\ead{alessio.fascista@unisalento.it}

\author[a]{Giuseppe Ricci}
\ead{giuseppe.ricci@unisalento.it}

\cortext[cor1]{Corresponding author}

\address[a]{DII, Universit\`a del Salento, Via Monteroni, 73100 Lecce, Italy.}

\begin{abstract}
A $k$-nearest neighbors (KNN) approach to the design of radar detectors is investigated. The idea is to start with either raw data or well-known radar receiver statistics as feature vector to be fed to the KNN decision rule. 
In the latter case, the probability of false alarm and probability of detection are characterized in closed-form; moreover, it is proved that the detector possesses the constant false alarm rate (CFAR) property and the relevant performance parameters are identified. 
Simulation examples are provided to illustrate the effectiveness of the proposed approach. 
\end{abstract}

\begin{keyword}
$k$-nearest neighbors (KNN) \sep radar detection \sep  constant false alarm rate (CFAR) property \sep generalized likelihood ratio test (GLRT)
\end{keyword}

\maketitle

\section{Introduction}\label{sec:intro}

The problem of radar detection has received significant attention over the past decades and is still an active field of research,  mostly employing statistical signal processing techniques based on hypothesis testing theory.
The classical detection framework has been set up by Kelly in his pioneering paper \cite{Kelly}. He derived  the
generalized likelihood ratio test (GLRT) based on 
the cell under test (CUT), also referred to as primary data, 
and a set of training or secondary data; such data are supposed to be independent and identically distributed random vectors, free of signal components, and  sharing with the CUT the statistical characteristics of the noise.
 In  \cite{Kelly89} the performance of such a detector is assessed when the actual
steering vector is not aligned with the nominal one.
Later, many  works have addressed the problem of enhancing either the selectivity or the robustness of  GLRT-based detectors to  mismatches. In particular, the adaptive matched filter (AMF) \cite{Kelly-Nitzberg} is a prominent example of robust detector, while the adaptive coherence estimator (ACE, also known as adaptive normalized 
matched filter)  \cite{Asymptotically,ACE} and Kelly's detector are selective receivers, i.e., they have excellent rejection capabilities of signals arriving from directions  different from the nominal one.
Other detectors try to explicitly take into account rejection capabilities at the design stage, as for instance those based on the adaptive beamformer orthogonal rejection test (ABORT)  \cite{Pulsone-Rader} or related ideas \cite{Fabrizio-Farina, W-ABORT, CR_SPL, Besson}.

Recently, the possibility to bring tools from machine learning to the radar context has started to be investigated. 
For instance, support vector machines (SVM) have been proposed for the design of radar detectors: in particular, a linear SVM approach has been adopted in \cite{Ball}, that is able to detect small SNR signals in situations where a  cell-averaging constant false alarm rate (CA-CFAR) 
scheme is unable. An SVM-based CFAR detector has also been proposed in \cite{LeiouWang}: it is robust in non-homogeneous environments
including multiple targets and clutter edge. Notably, SVM-based approaches are showing their effectiveness also for detection problems outside the radar domain, namely spectrum sensing \cite{SVM_JSAC,SVM_TSP,EUSIPCO2019}.
The $k$-nearest neighbors (KNN) approach has been used to detect radar signals in non-Gaussian noise \cite{CR_Boston2019}. Therein, modified Kelly's and ACE stastistics are the entries of the feature vectors. The proposed detector is not CFAR, but its probability of false alarm
($P_{fa}$) is not very sensitive to unknown disturbance (clutter plus thermal noise) statistics.
Finally, deep learning tools have been applied to detection, classification, and waveform generation for automotive radars
\cite{detection_radarcon19,classification_radarcon19,waveform_design_radarcon19}.

One of the main issues with the application of machine learning tools to radar detection is that it is generally very difficult to theoretically assess the performance of the resulting receivers.
In this paper, we make a step towards this direction by investigating the potential of a novel family of detectors based on the KNN approach. The latter is in fact one of the simplest machine learning algorithms for classification, since it basically performs computation of distances with respect to a training set, followed by a count-based decision rule (e.g., majority); by contrast, SVM for instance requires to solve numerically an optimization problem to obtain the decision rule, which hampers its theoretical analysis.
The contribution of the present paper is twofold. First, we statistically characterize the KNN detection procedure, providing general closed-form expressions for the probability of false alarm and probability of detection; second, we apply the proposed framework to the design and analysis of radar detectors based on different criteria (feature vectors). Numerical results are provided to illustrate the effectiveness of the detectors that can be obtained by the proposed approach. 

The remaining of the paper is organized as follows. Sec. \ref{sec:KNN} sets up the KNN-based detection problem and provides a general characterization of the achievable performance. Sec. \ref{sec:app_radar} is instead devoted to the design and analysis of radar detectors based on the proposed framework (derivation details reported in Appendix A). We conclude the paper in Sec. \ref{sec:conclusion}.

\section{KNN-based detectors}\label{sec:KNN}

\subsection{Problem formulation}

Let $\bm{o} \in O^n$ denote the $n$-dimensional observation vector containing the data collected over a given space $O^n$; they can be in general measurements obtained through a set of sensors, information stored in a database, etc. Data are typically mapped into a  lower dimensional space where some distinguishing characteristics of the observed phenomenon tend to emerge. This process is known as \emph{feature extraction} and can be defined as a function $\mathcal{F}: O^n  \rightarrow F^m$ that maps the observation vector $\bm{o} \in O^n$ into a \emph{feature vector} $\bm{x} \in F^m$, i.e., $\bm{x} = \mathcal{F}(\bm{o})$. As to $F^m$, it represents the feature space, assumed to be an $m$-dimensional Euclidean vector space ($\|\point\|$ 
will denote the Euclidean norm). In this context, a general two-class classification problem can be  formulated as the following binary hypothesis testing problem
\begin{equation}
\left\{
\begin{array}{ll}
H_{0}: & \bm{x}  \thicksim {\cal D}^0 \\ 
H_{1}: & \bm{x}  \thicksim {\cal D}^1 
\end{array}
\right. \label{eq::generictest}
\end{equation}
and consists in determining whether the feature vector $\bm{x}$ is distributed according to $\mathcal{D}^0$ ($H_0$ hypothesis) or $\mathcal{D}^1$ ($H_1$ hypothesis) that in general will depend on unknown parameters. Hereafter, we use superscripts ${ }^0$ and ${}^1$ to indicate data belonging to the $H_0$ and $H_1$ hypothesis, respectively. To assess the performance of classifiers for problem~\eqref{eq::generictest}, we will resort to the well-established type I and type II error metrics.  
More specifically, type I errors occur if the decision scheme decides for $H_1$ when $H_0$ is in force; if the $H_0$ hypothesis is simple, i.e., completely specified, the probability of a type I error is commonly denoted probability of false alarm ($P_{fa}$). On the other hand, type II errors occur whenever the decision scheme decides for $H_0$ when $H_1$ is true (missed detection) and its probability is equal to $1 - P_d$ with $P_d$ denoting the probability of detection, that is, the probability to decide for $H_1$ when $H_1$ holds true.  Radar detectors are commonly designed to guarantee $P_d$ values as close as possible to one for a preassigned value of the $P_{fa}$. 

To derive the KNN classifier, we need a training set containing representative examples of the observed data under both $H_0$ and $H_1$ hypotheses. Without loss of generality, we assume that $N_T$ independent observations of the raw data $\bm{o}$ under both $H_0$ and $H_1$  are available. Observation vectors $\bm{o}^0_i$ (under $H_0$) are used to construct 
the corresponding feature vectors $\bm{x}^0_i$, $i=1,\ldots,N_T$. Similarly, observation vectors $\bm{o}^1_i$ (under $H_1$) are used to construct 
the feature vectors $\bm{x}^1_i$, $i=1,\ldots,N_T$.
Accordingly,
we obtain the following training set
${\cal T}= {\cal T}^0 \cup  {\cal T}^1$
with
\begin{equation}
{\cal T}^0=
\left\{\bt_i^0= \begin{bmatrix}\bx_i^0 \\ 0\end{bmatrix} \in F^{(m+1) \times 1}, i=1, \ldots, N_T \right\} \label{setT0}
\end{equation}
where $\bx_i^0 = \left[ x_i^0[1] \ \cdots \   x_i^0[m]\right]^T $ ($^T$ denoting transposition), $x^0_i[j], j=1,\ldots,m$, denoting the $j$th feature obtained from the data under $H_0$, and similarly
\begin{equation}
{\cal T}^1=
\left\{\bt_i^1=\begin{bmatrix} \bx_i^1 \\ 1 \end{bmatrix} \in F^{(m+1) \times 1}, i=1, \ldots, N_T \right\} \label{setT1}
\end{equation}
with $\bx_i^1 = \left[ x_i^1[1] \ \cdots \   x_i^1[m]\right]^T$. Fig. \ref{fig:conceptschema} shows a two-dimensional example.

To implement the KNN-based decision rule, we associate to a given input data under test $\bm{o}$ the feature vector
$
\bx= \left[x[1] \ \cdots \ x[m]
\right] ^T
$; more precisely,
denoting by $\bt_i=\begin{bmatrix}
\bx_i \\ \ell_i \end{bmatrix}$, $i=1, \ldots, 2N_T$,
the elements of the training 
sample ${\cal T}$ 
(the ``label'' $\ell_i$ is either 0 or 1 depending 
on the fact that $\bt_i$ belongs to ${\cal T}^0$ or ${\cal T}^1$, respectively),
we compute the following statistic
$$
\overline{\ell}= \frac{1}{k} \sum_{\left\{ i: \ \bx_i \in N_k\left(\bx \right)\right\} } \ell_i
$$
with $N_k\left(\bx \right)$ the set of the $k$ vectors $\bx_i$s closest to the test vector $\bx$ according to the Euclidean norm $\|\point\|$ (``the $k$ nearest neighbors of $\bx$''), as highlighted in Fig. \ref{fig:conceptschema}.
\begin{figure}
\centering
\includegraphics[width=0.75\textwidth]{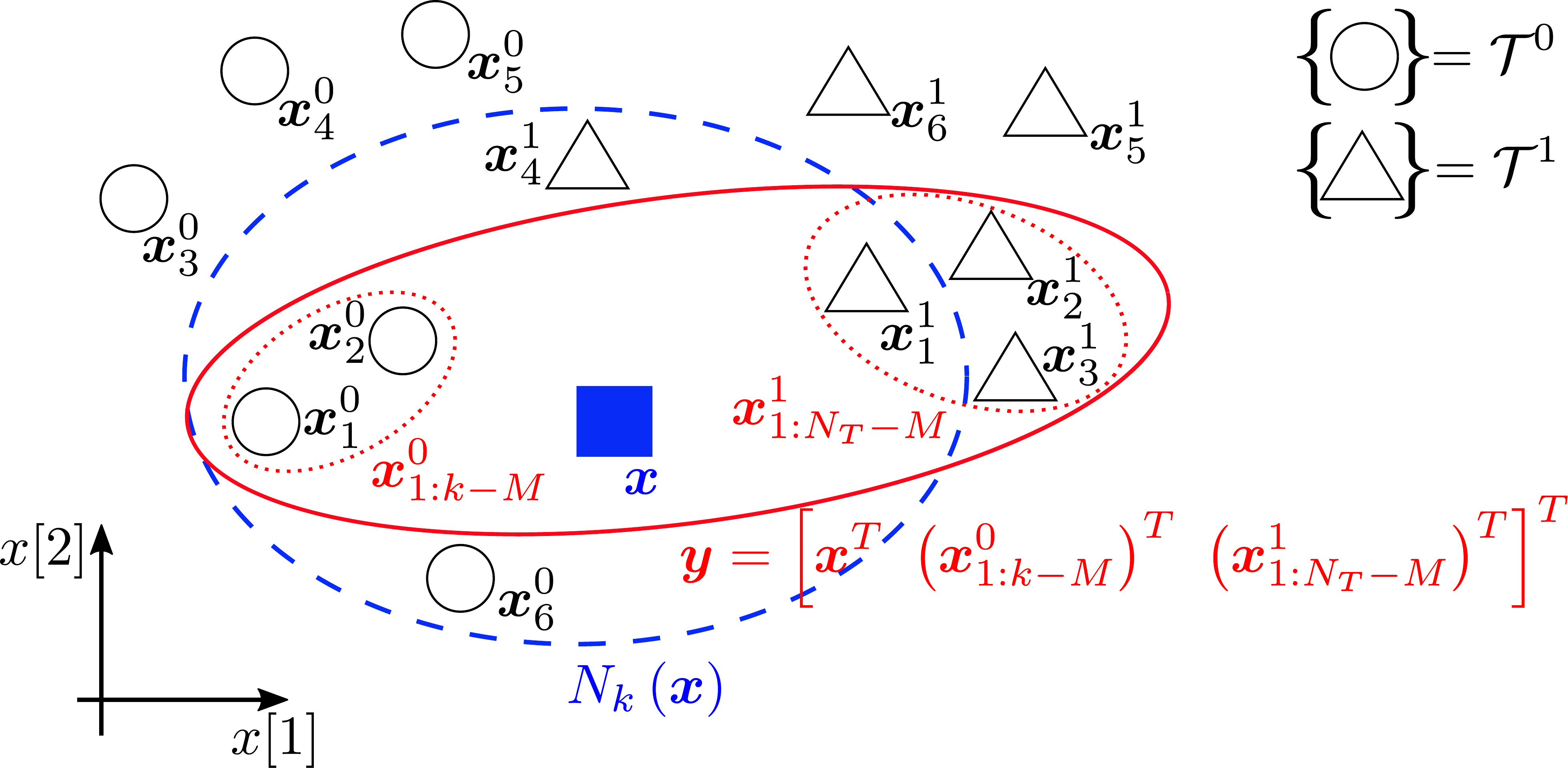} 
\caption{Graphical representation of a two-dimensional feature space, with $N_T=6$ training data (circles and triangles) for each hypothesis. The data under test is depicted as a (blue) square, and its $k=5$ nearest neighbors are enclosed in a dashed ellipse: in this example, for $M=3$ the resulting decision would be $H_0$. The solid (red) ellipse, conversely, groups the first $k-M$ elements of $\mathcal{T}^0$ and the first $N_T-M$ elements of $\mathcal{T}^1$, which for the specific case at hand correspond to an unfavorable realization of the event ${\cal B}_{j_1: j_{k-M}; i_1 : i_{N_T-M}}$.
}
\label{fig:conceptschema}
\end{figure}
Finally, $H_0$ or $H_1$ is selected
according to the decision rule
$$
\overline{\ell} \test T
$$
where $T$ is a chosen detection threshold.
Notice that, due to the fact that the $\ell_i$ are binary digits, the test can be equivalently re-written as
\be
\left\{
\begin{array}{ccl}
\mbox{choose } &H_0: & \mbox{if } \# \ \bx_i^1\in N_k\left(\bx \right)  \leq M
\\
\text{choose } &H_1: & \mbox{otherwise}
\end{array}
\right.
\label{eq:rule2}
\ee
with $M$ the greatest integer such that $T \geq M/k$ and $\#$ stands for ``the number of''.
Notice also that $\overline{\ell}$ is a discrete random variable and, hence, different values of $T$ 
do not necessarily correspond to different values of $P_{fa}$ (for a deterministic test).

\subsection{Performance assessment of KNN detectors}

In this section, we provide closed-form analytical formulas for the $P_{fa}$ and the $P_d$ of the proposed KNN-based approach, which are useful to predict the achievable classification performance and offer some insights to interpret the classification process, as shown later in Sec. \ref{sec:app_radar}.

For future reference $E[\point]$ denotes the expectation operator, 
$$
\bm{x}^0_{1:k-M} = \left[(\bx_{1}^0)^T \ \cdots \ (\bx_{k-M}^0)^T \right]^T, \quad \bm{x}^1_{1:N_T-M} = \left[(\bx_{1}^1)^T \ \cdots \ (\bx_{N_T-M}^1)^T\right]^T,$$
\be
\bm{y} = 
\left[\bm{x}^T \ \left(\bm{x}^{0}_{1:k-M}\right)^T \ \left(\bm{x}^{1}_{1:N_T-M} \right)^T \right]^T 
\label{eq:defy}
\ee
For the case of a generic vector of features $\bx$, we have the following result.
\begin{myteo}
Let $\bx \in F^m$ be an  $m$-dimensional vector of features, constructed from the data under test $\bm{o} \in O^n$, and ${\cal T} = {\cal T}^0 \cup {\cal T}^1$ represent a training set containing $N_T$ independent realizations  under both $H_0$ and $H_1$ hypotheses, denoted as $\bx_i^0$ and $\bx_i^1$, respectively, $i=1,\ldots,N_T$ (ref. eqs. (\ref{setT0})-(\ref{setT1})). The following expression holds true for the probability that the KNN-based decision statistic in eq.~\eqref{eq:rule2} exceeds the threshold $T$:
\begin{eqnarray*}
P(\overline{\ell} > T) &=&
1-
\left(
\begin{array}{c}
N_T \\ k-M
\end{array}
\right)
\left(
\begin{array}{c}
N_T \\ N_T-M
\end{array}
\right)
\\ &\times&
E_{ \by 
}\left[ I_{\cal Y}(\by)
\left( p_0 \left( \bx, \bm{x}^0_{1:k-M} \right) \right)^{N_T-k+M}
\left( p_1 \left( \bx, \bm{x}^1_{1:N_T-M} \right) \right)^{M}
\right]
\end{eqnarray*}
with
$$
p_0 \left( \bx, \bm{x}^0_{1:k-M} \right) \!= \! P \left(  \left\{ \|\bx^0-\bx\| \geq  \max_{r \in \{1, \ldots, k-M\}} \|\bx_{r}^0-\bx\|\right\} \Big| \bx, \bm{x}^0_{1:k-M} \right)
$$
$$
p_1  \!\!\left( \bx, \bm{x}^1_{1:N_T-M} \right) \!=\! P \! \left(  \!\left\{ \|\bx^1-\bx\| \leq \!\!\!\! \min_{r \in \{1, \ldots, N_T-M\}} \!\! \|\bx_{r}^1-\bx\| \right\} \Big| \bx, \bm{x}^1_{1:N_T-M} \right)
$$
where $\bx^0$ and $\bx^1$ are generic random variables distributed according to $\mathcal{D}^0$ or $\mathcal{D}^1$, respectively (ref. eq.~(\ref{eq::generictest})), and $I_{\cal Y}(\by)$ is the indicator function of the set ${\cal Y}$  introduced to constrain the above probability to be nonzero only if $\bm{y} \in {\cal Y}$, 
\begin{align*}
 {\cal Y} = \{\by: 
&\|\bx_{1}^0 - \bx\|  \leq \|\bx_{1}^1-\bx\|, \ldots,  
\|\bx_{1}^0-\bx\| \leq  \|\bx_{N_T-M}^1-\bx\|, \ldots, 
\\
&\|\bx_{k-M}^0-\bx\| \leq \|\bx_{1}^1-\bx\| , \ldots,   
\|\bx_{k-M}^0-\bx\| \leq \|\bx_{N_T-M}^1-\bx\|
\}.
\end{align*}
\end{myteo}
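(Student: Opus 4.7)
The plan is to compute $P(\overline{\ell}\leq T)$ and obtain the proposition by complementation. By the equivalent form of the test in \eqref{eq:rule2}, the event $\{\overline{\ell}\leq T\}$ is exactly ``at most $M$ of the $k$ nearest neighbors of $\bx$ lie in ${\cal T}^1$'', equivalently, ``at least $k-M$ of them lie in ${\cal T}^0$''. Sorting the distances to $\bx$ within each class, this event is characterized by the algebraic condition that the $(k-M)$-th smallest class-0 distance be no larger than the $(M+1)$-th smallest class-1 distance.

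The core step is a combinatorial partition of this event. For each pair of distinct class-0 indices $\{j_1,\ldots,j_{k-M}\}$ and distinct class-1 indices $\{i_1,\ldots,i_{N_T-M}\}$, I would define ${\cal B}_{j_1:j_{k-M};i_1:i_{N_T-M}}$ as the event on which: \emph{(i)} every non-selected class-0 vector is at distance from $\bx$ at least as large as the farthest selected class-0 vector; \emph{(ii)} every non-selected class-1 vector is at distance from $\bx$ at most as large as the closest selected class-1 vector; \emph{(iii)} every selected class-0 distance is no larger than every selected class-1 distance. Condition \emph{(iii)} is exactly the membership $\by\in{\cal Y}$. Under the continuity of the feature distributions (so that ties carry zero probability), these events are pairwise disjoint a.s.; their union coincides with $\{\overline{\ell}\leq T\}$, since on each ${\cal B}$ the selected class-0 indices are forced to be those of the $k-M$ closest class-0 vectors and the selected class-1 indices are forced to be those of the $N_T-M$ farthest class-1 vectors, and \emph{(iii)} then yields the algebraic characterization above. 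Because the training data are i.i.d.\ within each class, $P({\cal B}_{j_1:j_{k-M};i_1:i_{N_T-M}})$ is identical for every pair of subsets, so the summation collapses to $\binom{N_T}{k-M}\binom{N_T}{N_T-M}$ times a single canonical probability.

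To evaluate this canonical probability, say for indices $1,\ldots,k-M$ and $1,\ldots,N_T-M$, I would condition on $\by$ from \eqref{eq:defy}, which freezes $\bx$ and the selected training vectors. Condition \emph{(iii)} becomes the deterministic indicator $I_{\cal Y}(\by)$, while \emph{(i)} and \emph{(ii)} involve only the $N_T-k+M$ non-selected class-0 vectors and the $M$ non-selected class-1 vectors, all independent of $\by$. By the i.i.d.\ hypothesis, \emph{(i)} factorizes into $N_T-k+M$ identical conditional probabilities equal to $p_0(\bx,\bm{x}^0_{1:k-M})$, and \emph{(ii)} contributes $p_1(\bx,\bm{x}^1_{1:N_T-M})^M$. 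Taking expectation over $\by$ and multiplying by the binomial count yields $P(\overline{\ell}\leq T)$, and complementation delivers the stated formula. The main delicate point will be the rigorous setup of the partition, in particular justifying that the tie sets are negligible; this is where continuity of the feature distributions is invoked. The remainder is a routine application of symmetry across the training samples and conditional independence of the non-selected vectors.
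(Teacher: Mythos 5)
Your proposal is correct and follows essentially the same route as the paper: the same decomposition of $\{\overline{\ell}\leq T\}$ into the mutually exclusive events obtained by intersecting the ``selected class-0 closer than selected class-1'' condition (the paper's ${\cal B}^{01}$, your \emph{(iii)}) with the ordering constraints within each class (the paper's ${\cal B}^{00}$ and ${\cal B}^{11}$, your \emph{(i)} and \emph{(ii)}), followed by the same symmetry count and the same conditional factorization given $\by$. Your explicit remark that disjointness and the equivalence with the KNN decision event hold only up to tie sets of measure zero is a point the paper leaves implicit, but it does not change the argument.
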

\begin{proof}
The proof relies on a proper decomposition of the event
$$
\left\{\# \ \bx_i^1\in N_k\left(\bx \right)  \leq M \right\},
$$
corresponding to outcomes of the underlying random experiment that lead the KNN classifier to decide for $H_0$.
Obviously,  $P(\overline{\ell} > T)$ is the probability of the complementary of the above event.
In formulas, the KNN classifier decides for $H_0$ only if the outcomes of the random experiment belong to
\be
\bigcup_{\stackrel{j_1, \ldots, j_{k-M} \in {\cal N}_T}{i_1, \ldots, i_{N_T-M} \in {\cal N}_T}}
{\cal B}^{01}_{j_1: j_{k-M}; i_1: i_{N_T-M}}
\label{union_of_nonexclusive_events}
\ee
with
\begin{align*}
{\cal B}^{01}_{j_1: j_h; i_1: i_{N_T-(k-h)}} &=
\left\{ 
\|\bx_{j_1}^0-\bx\| \leq \|\bx_{i_1}^1-\bx\|, \ldots,  
\|\bx_{j_1}^0-\bx\| \leq \|\bx_{i_{N_T-(k-h)}}^1-\bx\|,  
\right.
\\&
\left.
\ldots,\|\bx_{j_h}^0-\bx\| \leq \|\bx_{i_1}^1-\bx\| , \ldots,   
\|\bx_{j_h}^0-\bx\| \leq \|\bx_{i_{N_T-(k-h)}}^1-\bx\|
\right\}
\end{align*}
and ${\cal N}_T=\{1, \ldots, N_T\}$.
In fact,
the event
${\cal B}^{01}_{j_1: j_h; i_1: i_{N_T-(k-h)}}$, with $h \leq k$, 
occurs when, given a subset of $h$ feature vectors belonging to ${\cal T}^0$,  indexed by $j_1, \ldots, j_h$, and a subset of $N_T - (k-h)$ feature vectors belonging to ${\cal T}^1$, indexed by $i_1, \ldots, i_{N_T-(k-h)}$, each element in the first subset is closer to the test vector $\bx$ than any other element in the second subset.
Thus, the event defined by eq. (\ref{union_of_nonexclusive_events}), where
$h=k-M$, 
is tantamount to imposing that at most $M$ feature vectors 
of ${\cal T}^1$ belongs to $N_k\left(\bx \right)$.
However, the elements of the (finite) union
in eq. (\ref{union_of_nonexclusive_events})
are not mutually exclusive events. For this reason, we introduce the additional events
$$
{\cal B}^{00}_{j_1: j_h}= 
\left\{
\max_{s \in  \{j_1, \ldots, j_{h}\}} \|\bx_{s}^0-\bx\| \leq \min_{r \in {\cal N}_T \setminus 
\{ j_1, \ldots, j_h \}} \|\bx_{r}^0-\bx\|
\right\}
$$
and
$$
{\cal B}^{11}_{i_1: i_{N_T-(k-h)}}= 
\left\{
\min_{s \in  \{i_1, \ldots, i_{N_T-(k-h)}\}} \|\bx_{s}^1-\bx\| \geq \max_{r \in {\cal N}_T \setminus 
\{i_1, \ldots, i_{N_T-(k-h)}\}} \|\bx_{r}^1-\bx\|
\right\}.
$$
 As to ${\cal B}^{00}_{j_1: j_h}$, it denotes the event that the $h$ feature vectors belonging to ${\cal T}^0$ and indexed by $j_1, \ldots, j_h$, are the closest to $\bx$ among all the vectors in the training set ${\cal T}^0$. Similarly, ${\cal B}^{11}_{i_1: i_{N_T-(k-h)}}$ occurs when the $N_T-(k-h)$ feature vectors, 
belonging to ${\cal T}^1$ and indexed by $i_1, \ldots, i_{N_T-(k-h)}$,
 are the farthest from $\bx$ among all the vectors in the training set ${\cal T}^1$. Using the above definitions, 
we define the event
$$
{\cal B}_{j_1: j_{k-M}; i_1 : i_{N_T-M}} = {\cal B}^{01}_{j_1: j_{k-M}; i_1 : i_{N_T-M}} \cap
{\cal B}^{00}_{j_1: j_{k-M}} \cap {\cal B}^{11}_{i_1: i_{N_T-M}}
$$
and re-write eq. (\ref{union_of_nonexclusive_events})
as the union of mutually exclusive events,
i.e.,
\be
\bigcup_{\stackrel{j_1, \ldots, j_{k-M} \in {\cal N}_T}{i_1, \ldots, i_{N_T-M} \in {\cal N}_T}}
{\cal B}_{j_1: j_{k-M}; i_1: i_{N_T-M}}.
\label{union_of_exclusive_events}
\ee
To give a concrete example, in Fig.~\ref{fig:conceptschema} we report a possible realization of the elements in the feature space that would lead the KNN detector to decide for $H_0$. In general, $P(\overline{\ell} > T)$ can be obtained as the (unconditional) complementary probability of the union over all possible combinations that produce the event ${\cal B}_{j_1: j_{k-M}; i_1 : i_{N_T-M}}$, namely as
\begin{equation}
P(\overline{\ell} > T) = 1-P \left( \bigcup_{\stackrel{j_1, \ldots, j_{k-M} \in {\cal N}_T}{i_1, \ldots, i_{N_T-M} \in {\cal N}_T}} \!\!\!\!\!
{\cal B}_{j_1: j_{k-M}; i_1 : i_{N_T-M}}\right).
\label{eq:PFA_KNN}
\end{equation}
To compute the right-hand side of the above formula, we can simply count all the possible combinations of $k-M$ and $N_T-M$ feature vectors from ${\cal T}^0$ and ${\cal T}^1$, respectively, and multiply by the probability of the event
${\cal B}_{1:k-M; 1: N_T-M} = {\cal B}^{01}_{1: k-M; 1 : N_T-M} \cap
{\cal B}^{00}_{1: k-M} \cap {\cal B}^{11}_{1: N_T-M}$
(namely, by choosing the first $k-M$ and $N_T-M$ indexes, which can be of course in any order with respect to the distance from $\bx$, as highlighted in Fig.~\ref{fig:conceptschema} by red dotted ellipses).  Then, eq.~(\ref{eq:PFA_KNN}) can be written as
$$
P(\overline{\ell} > T)= 1-
\left(
\begin{array}{c}
N_T \\ k-M
\end{array}
\right)
\left(
\begin{array}{c}
N_T \\ N_T-M
\end{array}
\right)
P \left( 
{\cal B}_{1:k-M; 1: N_T-M}
\right).
$$
The thesis follows by observing that, conditioned on $\bm{y}$, defined in eq.~(\ref{eq:defy}), the above joint probability is nonzero only if $\bm{y} \in \cal{Y}$ and can be factorized as
\begin{align*}
P & \left( 
{\cal B}_{1:k-M; 1: N_T-M}
| \by
\right) \\
&=
\left[ 
P \left(  \left\{ \|\bx^0-\bx\| \geq  \max_{r \in \{1, \ldots, k-M\}} \|\bx_{r}^0-\bx\|\right\} | \bx, \bm{x}^0_{1:k-M} \right)\right]^{N_T-k+M}
\\ &\times
\left[ 
P \left(  \left\{ \|\bx^1-\bx\| \leq \min_{r \in \{1, \ldots, N_T-M\}} \|\bx_{r}^1-\bx\| \right\}| \bx, \bm{x}^1_{1:N_T-M} \right)\right]^{M}
 \\&=
\left( p_0 \left( \bx, \bm{x}^0_{1:k-M} \right) \right)^{N_T-k+M}
\times
\left(p_1 \left( \bx, \bm{x}^1_{1:N_T-M} \right) \right)^{M}
\end{align*}
where $\bx^0$ is any chosen element of $\mathcal{T}^0$ not included in $\bx^0_{1:k-M}$ (i.e., $\bx^0 \in \{\bx^0_{k-M+1},\ldots,\bx^0_{N_T} \}$); analogously,  $\bx^1$ is any chosen element of $\mathcal{T}^1$ not included in $\bx^1_{1:N_T-M}$ (i.e., $\bx^1 \in \{\bx^1_{N_T-M+1},\ldots,\bx^1_{N_T} \}$). Thus, $\bx^0$ and $\bx^1$ are generic random vectors distributed according to $\mathcal{D}^0$ and $\mathcal{D}^1$, respectively (ref. eq.~(\ref{eq::generictest})).


\end{proof}

Proposition 1 allows one to analytically express $P_d$ and $P_{fa}$ in terms of the two probabilities $p_0$ and $p_1$, which are related to  elementary events in the feature space. This result is fully general and does not depend on the distribution of the data (features). We will show in the sequel that, given a specific feature vector, such a tool can be used to prove the CFAR property of the resulting detector, and to identify its relevant performance parameters.

Before proceeding, we provide a simple clarifying example. Suppose that the training data and the feature vector under test are complex normal  with an expected value equal to $\bme_i$ under $H_i$, $i=0,1$, and a scalar covariance matrix. It follows that the hypothesis testing problem \eqref{eq::generictest} becomes
\begin{equation}
\left\{
\begin{array}{ll}
H_{0}: & \bm{x}  \thicksim {\cal CN}_m\left(\bme_0,  \sigma^2 \bI_m \right)\\ 
H_{1}: & \bm{x}  \thicksim {\cal CN}_m\left(\bme_1,  \sigma^2 \bI_m \right)
\end{array}
\right. \label{eq::example}
\end{equation}
where $\sigma^2>0$ and $\bI_m$ is the $m \times m$ identity matrix. 
Since, conditioned on $\bx$, we have that
\be
\bx^0-\bx \thicksim {\cal CN}_m\left(\bme_0-\bx,  \sigma^2 \bI_m \right)
\label{def:RV1}
\ee
and
\be
\bx^1-\bx \thicksim {\cal CN}_m\left(\bme_1-\bx,  \sigma^2 \bI_m \right)
\label{def:RV2},
\ee
it turns out that the norm squared of the random variables (RVs) \eqref{def:RV1} and \eqref{def:RV2},
normalized by $\sigma^2$,
are
complex noncentral chi-square RVs with $m$ degrees of freedom and noncentrality parameters $\frac{\| \bme_0-\bx \|}{\sigma}$ and $\frac{\|\bme_1-\bx \|}{\sigma}$, respectively.
In symbols,
we write
$
\frac{1}{\sigma^2} \|\bx^0-\bx \|^2 \thicksim  {\cal C}_{{\cal \chi}^2_{m}}\left(\frac{\| \bme_0-\bx \|}{\sigma} \right)
$
and
$
\frac{1}{\sigma^2} \| \bx^1-\bx \|^2 \thicksim {\cal C}_{{\cal \chi}^2_{m}}\left( \frac{\|\bme_1-\bx \|}{\sigma} \right).
$
Thus, $p_0 \left( \bx, \bm{x}^0_{1:k-M} \right) $ and 
$p_1 \left( \bx, \bm{x}^1_{1:N_T-M} \right)$ can be  computed 
in terms of the corresponding cumulative distribution functions (CDFs). 


\section{Application of the KNN approach to adaptive radar detection}\label{sec:app_radar}

In this section, we demonstrate that KNN-based decision schemes can be fruitfully applied to design novel radar detectors. We recall that the well-known  problem of detecting the possible presence of a coherent return from a given CUT in range, doppler, and azimuth, is classically formulated as the following
hypothesis testing problem:
\begin{equation}
\left\{
\begin{array}{ll}
H_{0}: & \bz =  \bn \\
H_{1}: & \bz = \alpha \bv + \bn
\end{array} 
\right.\label{eq:test1ord}
\end{equation}
where
$\bz \in \C^{N \times 1}$, $\bn \in \C^{N \times 1}$, and $\bv \in \C^{N \times 1}$ denote 
the received vector, the corresponding noise term, and 
the known steering vector of the useful target echo.
The noise term is commonly modeled according to the complex normal distribution
with zero mean and unknown (Hermitian) positive definite matrix $\bC$,  i.e., $\bn \sim {\cal CN}_N (\bzero, \bC)$.
Modeling $\alpha \in \C$ has an unknown deterministic parameter
returns a complex normal distribution for $\bz$ under both hypotheses; the non-zero mean of 
the received vector under $H_1$ makes it possible to discriminate between the two hypotheses, namely by resorting to the GLRT.

The above classical approach led to a number of well-known receivers following the pioneering paper by Kelly \cite{Kelly}, as recalled in Sec. \ref{sec:intro}; they exploit (in addition to the primary received signal $\bz$) a set of secondary data
$\bor_1, \ldots, \bor_{K_S}$, independent of $\bz$, free of signal components, and  sharing with the CUT the statistical characteristics of the noise.
As already observed, in this paper we want to investigate the potential of choosing between $H_0$ and $H_1$ based on a KNN classifier. For the specific radar detection problem at hand, we consider as input data the vector obtained by stacking both primary and secondary data, namely $\bm{o} = [\bz^T \ \bor^T_1 \cdots \bor^T_{K_S}]^T$, and define $\bS$ as $K_S$ times the sample covariance matrix based on secondary data, namely
$$
\bS=\sum_{i=1}^{K_S} \bor_i \bor_i^H
$$
with $^H$ denoting the complex conjugate transpose. 
In the following we develop the KNN approach for different choices of the feature vector.

\subsection{First approach: a solution based on a raw data}\label{sec:first}

We propose to use as feature vector $\bx$ the ``whitened data'' under test, i.e., 
\be
\bm{x}=\bS^{-1/2} \bz.
\label{eq:raw_data}
\ee
Moreover, in radar detection problems it is customary to assume that signals backscattered by moving targets follow a statistical model as, for instance, the one provided in the hypotheses test \eqref{eq:test1ord}. The availability of a model for the data introduces significant advantages in the KNN training procedure, allowing the whole training set $\cal{T}$ to be constructed artificially, without requiring any preliminary collection phase, as instead typical in the majority of machine learning problems. For the specific case at hand, we can draw $N_T$ independent realizations of the observation vector $\bm{o} = [\bz^T \ \bor^T_1 \cdots \bor^T_{K_S}]^T$ under both $H_0$ and $H_1$ hypotheses. 
More precisely, the training data under $H_0$, i.e., $\bx_i^0, i=1,\ldots,N_T,$ are generated assuming  $\bz, \bor_1,\ldots,\bor_{K_S} \thicksim {\cal CN}_N (\bzero, \bC)$
with $\bm{C}$ a preassigned (i.e., a design value) covariance matrix.
Similarly, the training data under $H_1$, i.e., $\bx_i^1, i=1,\ldots,N_T,$ are generated according to  $\bz \thicksim {\cal CN}_N (\alpha \bv, \bC)$, $\bor_1,\ldots,\bor_{K_S}  \thicksim {\cal CN}_N (\bzero, \bC)$,
and based on a preassigned value of the nominal signal-to-noise ratio (SNR), defined as
\begin{equation}
\mbox{SNR} = |\alpha|^2 \bm{v}^H \bm{C}^{-1}\bm{v}. \label{nominalSNR}
\end{equation}

The performance assessment is conducted  in comparison to natural references, namely Kelly's detector, AMF, and ACE, given by
$$
t_{\text{\tiny{Kelly}}} (\bz,\bS) =
 \frac{|\bz^{H} \bS^{-1} \bv |^2}{\bv^{H} \bS^{-1} \bv \  \left( 1+ \bz^{H} \bS^{-1} \bz \right) },
$$
$$
t_{\text{\tiny{AMF}}} (\bz,\bS) =
 \frac{|\bz^{H} \bS^{-1} \bv |^2}{\bv^{H} \bS^{-1} \bv}, \quad
t_{\text{\tiny{ACE}}} (\bz,\bS) =
 \frac{|\bz^{H} \bS^{-1} \bv |^2}{\bv^{H} \bS^{-1} \bv \  \bz^{H} \bS^{-1} \bz },
$$
respectively,
with
$(\cdot)^{-1}$ the inverse of the matrix argument and $|\cdot|$ the modulus of the argument variable, respectively.
We estimate the $P_d$s 
and the $P_{fa}$ over $10^3$ and $10^5$ trials, respectively.
For simulation purposes,
we assume that the nominal (but also the actual) steering vector
has the  form 
$
\bv=[ 1 \ e^{j 2 \pi \nu_d} \cdots e^{j 2 \pi \nu_d (N-1)}]^T
$, 
thus focusing on a single antenna processing,
where $\nu_d$ is the normalized Doppler frequency shift\footnote{$\nu_d=f_dT_{\text{\tiny PRT}}$ with $f_d$ the target Doppler frequency shift and $T_{\text{\tiny PRT}}$ the pulse repetition time of the radar.}.  
The noise covariance matrix is Gaussian-shaped with  one-lag correlation coefficient 0.95.

We set $N=8$ and $K_S=16$ and generate $N_T=10^3$ training data for each hypothesis, so obtaining ${\cal T}_0$ and ${\cal T}_1$; for ${\cal T}_1$ we assume design parameters $f_d=0.08$ and SNR = 12 dB. The KNN is implemented with $k=50$ (using the Euclidean distance as metric) and the threshold is set to $T=1/2$, i.e., the algorithm chooses $H_1$ if at least 26 out of 50 closest data belong to ${\cal T}_1$.

\begin{figure}
\centering
\includegraphics[width=0.7\textwidth]{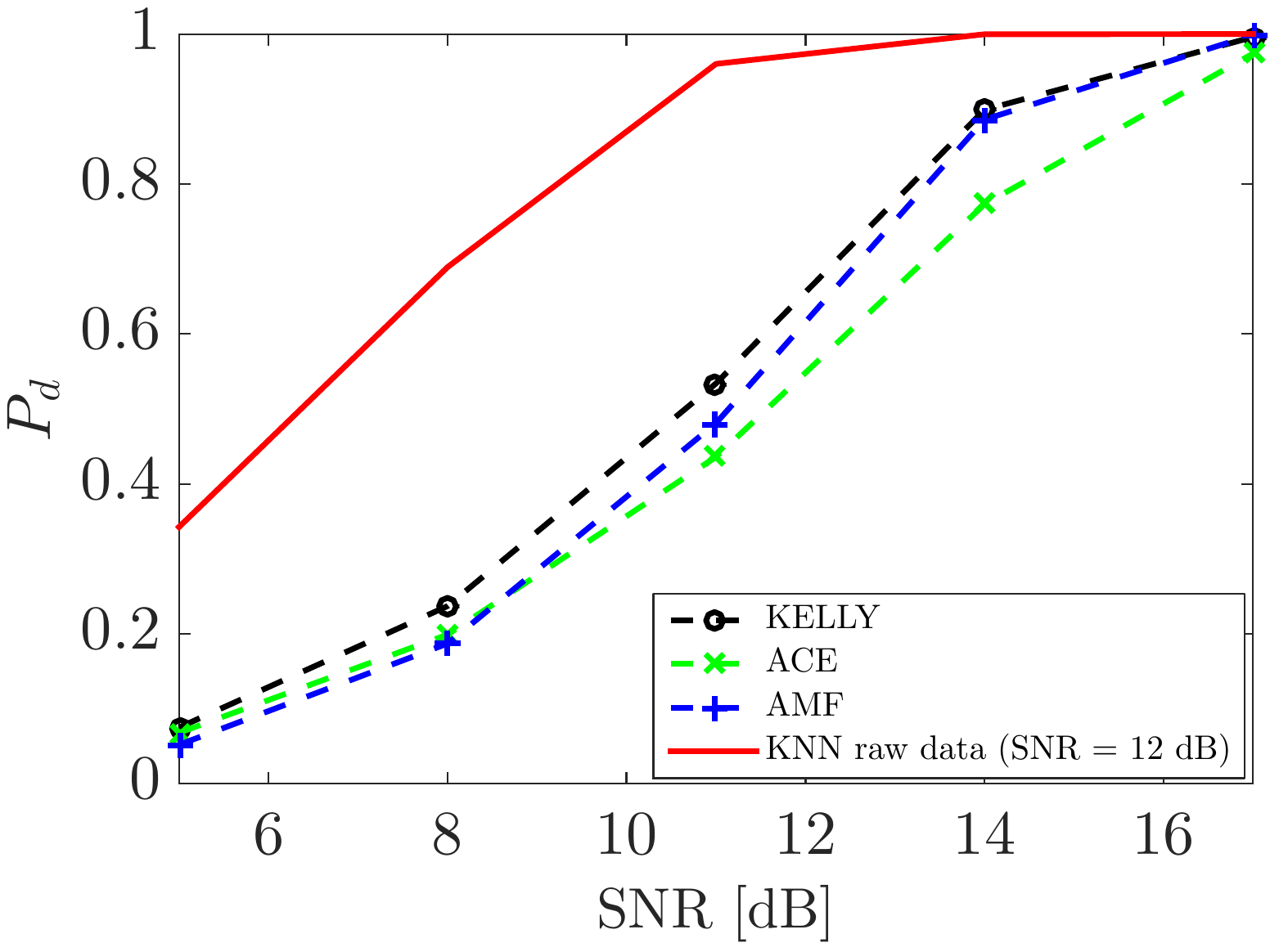} 
\caption{$P_d$ vs SNR under matched conditions for KNN using raw data $\bm{S}^{-1/2}\bm{z}$.}
\label{Fig3}
\end{figure}

\begin{figure}
\centering
\includegraphics[width=0.7\textwidth]{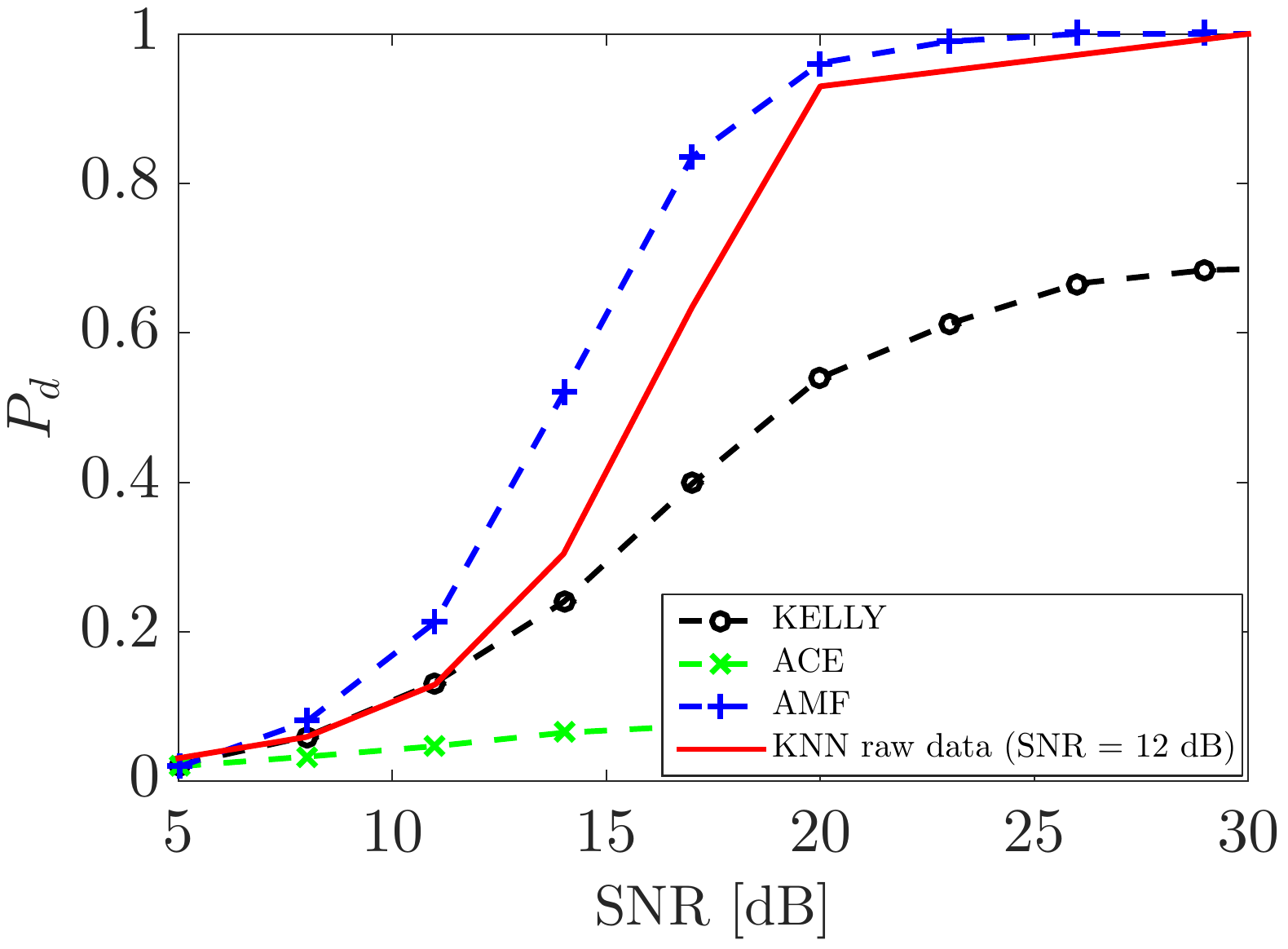} 
\caption{$P_d$ vs SNR under mismatched conditions with $\cos^2 \theta = 0.50$ for KNN using raw data $\bm{S}^{-1/2}\bm{z}$.}
\label{Fig4}
\end{figure}

First consider the case that the actual Doppler frequency is perfectly matched to the design value. 
The performance of the KNN detector is compared to AMF, Kelly's detector, and ACE in Figure~\ref{Fig3}. With the chosen parameters, we obtained $P_{fa}=0.0048$ for the KNN detector, and the threshold setting for AMF, Kelly's detector, and ACE has been performed to guarantee the same $P_{fa}$.
It is apparent that the proposed approach can achieve a very high $P_d$ even for reduced SNR values, that is, it is much more powerful than the competitors, including Kelly's detector which requires more than 4 dB of SNR to achieve the same performance. The results under mismatched conditions are also interesting.
In particular, Fig.~\ref{Fig4} shows the performance for the case in which the nominal Doppler frequency is perturbed by an additive term $0.4/N$, corresponding to a cosine squared $\cos^2 \theta = 0.50$ of the angle between the nominal steering vector $\bv$ and the mismatched one $\bp$, i.e.,
 $$
\cos^2 \theta=
\frac{|\bp^H \bC^{-1} \bv|^2}{\bp^H \bC^{-1} \bp \ \bv^H \bC^{-1} \bv}.
$$
The results show that, although the proposed KNN-based detector experiences a significant $P_d$ loss compared to the matched case, it is anyway more powerful than Kelly's detector and close to the AMF, but with the great advantage of strong detection capabilities under matched conditions. The price to pay for this excellent performance is the loss of the CFAR property. To investigate how changes in the noise distribution affect the resulting $P_{fa}$, a simulation analysis assuming a different value for the one-lag correlation coefficient, now set to 0.5, is conducted. For this setup, the resulting $P_{fa} = 0.0062$, meaning that the proposed detector is quite insensitive to changes in the noise statistics, despite it does not strictly possess the CFAR property. In next section, we will show that such a property can be recovered by using  a different feature vector; moreover, we will show that the same power of Kelly's detector can be obtained, but with a  level of robustness or selectivity that can be controlled by tuning some design parameters.

\subsection{Second approach: a CFAR solution based on known statistics}

To come up with a CFAR detector,
we propose to use a vector of features $\bm{x}$  obtained by stacking  (compressed) statistics of some well-known radar detectors (among them we cite here AMF, Kelly's detector, ACE, Energy detector, W-ABORT, etc.). 
It is very important to notice that all the above statistics can be decomposed in terms of common statistics.
To be more definite, consider the AMF and  Kelly's detector given in the previous section;
they can be decomposed as
\cite{BOR-MorganClaypool}
\begin{equation}
t_{\text{\tiny{AMF}}} 
=
\frac{\tilde{t}}{\beta}, \qquad
t_{\text{\tiny{Kelly}}}=\frac{\tilde{t}}{1+\tilde{t}} \label{eq:AMF}
\end{equation}
with
\begin{equation}
\tilde{t}=
\frac{t_{\text{\tiny{Kelly}}}}{1-t_{\text{\tiny{Kelly}}}} \label{eq:Kelly}
\end{equation}
and
\begin{equation}
\beta=\frac{1}{1+ \bz^{H} \bS^{-1} \bz - \frac{|\bz^{H} \bS^{-1} \bv |^2}{\bv^{H} \bS^{-1} \bv}}. \label{eq:beta}
\end{equation}
More generally, all  the above listed statistics share a common dependency on both $\tilde{t}$ and $\beta$, which suggests that they can be conveniently stacked to construct a feature vector $\bx$ having the following structure 
\begin{equation}
\bm{x} = \left[d_1\tilde{t}b[1] \ d_2\tilde{t}b[2] \ \cdots \ d_m\tilde{t}b[m] \right]^T
\label{eq::vect_struct}
\end{equation}
with
\be 
b[j] = f_j(\beta), \quad j=1,\ldots,m, \label{eq:btransf}
\ee
denoting an arbitrary (nonlinear) function of the $\beta$ statistic and $\bD=\mbox{diag}(d_1, \ldots, d_m)$ 
an arbitrary (nonnegative) diagonal matrix  that introduces a set of additional degrees of freedom,  better explained in the following. 

For the specific choice of $\bm{x}$ given in \eqref{eq::vect_struct}, we have the following results.

\begin{myteo}
Let $\bx = \left[d_1\tilde{t}b[1] \ d_2\tilde{t}b[2] \ \cdots \ d_m\tilde{t}b[m] \right]^T$ be an $m$-dimensional feature vector constructed from the data under test $\bm{o} = [\bz^T \ \bor^T_1 \cdots \bor^T_{K}]^T$ and ${\cal T}$ a training set containing $2N_T$ independent realizations ($N_T$ under $H_0$ and $N_T$ under $H_1$). The probabilities $p_0 \left( \bx, \bm{x}^0_{1:k-M} \right)$ and $p_1 \left( \bx, \bm{x}^1_{1:N_T-M} \right)$ given in Proposition 1 and involved in the computation of $P_{fa}$ and $P_d$ can be expressed in closed-form, as shown in Appendix A.
Based on them, it turns out that the $P_{fa}$ depends upon the $\mbox{SNR}$ \eqref{nominalSNR} used to generate the training data, but is otherwise independent of the actual covariance matrix $\bC$, that is, the detector possesses the constant false alarm rate (CFAR) property.
In addition, $P_d$ depends only upon the $\mbox{SNR}$ \eqref{nominalSNR} used to generate the training data, as well as on
$$
\mbox{SNR}_p=
|\alpha|^2 \bp^H \bC^{-1} \bp \quad \text{ and } \quad
\cos^2 \theta=
\frac{|\bp^H \bC^{-1} \bv|^2}{\bp^H \bC^{-1} \bp \ \bv^H \bC^{-1} \bv}.
$$
\end{myteo}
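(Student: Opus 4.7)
The plan is to deduce the claim from Proposition~1 by exploiting the distributional invariances of $\tilde{t}$ and $\beta$. Because the feature vector~\eqref{eq::vect_struct} is a deterministic componentwise map $(\tilde{t},\beta)\mapsto\bx=(d_1\tilde{t}f_1(\beta),\ldots,d_m\tilde{t}f_m(\beta))^T$, the joint law of $(\bx,\{\bx_i^0\},\{\bx_i^1\})$ is determined by the joint law of the underlying $(\tilde{t},\beta)$ pairs, with mutual independence across test and training indices. It therefore suffices to characterize the joint law of $(\tilde{t},\beta)$ under both hypotheses and to track how it depends on the unknown parameters.

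First I would invoke Kelly's classical analysis of the decomposition~\eqref{eq:Kelly}--\eqref{eq:beta}. Under $H_0$, the joint density of $(\tilde{t},\beta)$ is free of the disturbance covariance $\bC$: $\beta$ follows a Beta-type law with shape parameters depending only on $N$ and $K_S$, and, conditionally on $\beta$, the statistic $\tilde{t}$ has a central $F$-type distribution depending only on $(N,K_S)$. Under $H_1$, the same joint density becomes noncentral, and the unknown parameters enter only through the two scalars $\mbox{SNR}_p=|\alpha|^2\bp^H\bC^{-1}\bp$ and $\cos^2\theta$: the marginal of $\beta$ acquires a noncentrality proportional to $\mbox{SNR}_p(1-\cos^2\theta)$, while the conditional law of $\tilde{t}$ given $\beta$ acquires a noncentrality proportional to $\mbox{SNR}_p\,\beta\cos^2\theta$. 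In particular, $\bC$ never enters except through this pair of scalars.

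Next I would transport these invariances to $\bx$. For the test sample, the dependence on $\bC$ reduces to $(\mbox{SNR}_p,\cos^2\theta)$ and is absent under $H_0$; for the training samples (whose nominal steering is matched by construction, giving $\cos^2\theta=1$), it reduces to the design SNR alone. Proposition~1 expresses $P(\overline{\ell}>T)$ as an expectation of products of $p_0$ and $p_1$, and each of these is a probability of a Euclidean-distance inequality that depends on $\bC$ only through the scalars above. Taking the outer distribution of $\bx$ under $\mathcal{D}^0$ thus yields the CFAR property with $P_{fa}$ depending only on the design SNR, while taking it under $\mathcal{D}^1$ yields the claimed dependence of $P_d$ on the design SNR, $\mbox{SNR}_p$, and $\cos^2\theta$.

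The main obstacle I expect is producing closed-form expressions for $p_0$ and $p_1$ themselves, which is the task delegated to Appendix~A. After rewriting $\|\bx_{\cdot}-\bx\|^2=\sum_j d_j^2(\tilde{t}_{\cdot}f_j(\beta_{\cdot})-\tilde{t}f_j(\beta))^2$ in terms of the underlying $(\tilde{t},\beta)$ coordinates of each sample, the events defining $p_0$ and $p_1$ become nonlinear conditions on a pair of scalars, so evaluating the probabilities requires integrating the Kelly joint density over regions bounded by a finite union of polynomial inequalities in $(\tilde{t},\beta)$. The structural claims (CFAR and the reduction to $(\mbox{SNR}_p,\cos^2\theta)$), however, follow directly from the invariance argument above and do not require executing this integration explicitly.
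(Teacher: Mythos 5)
Your proposal is correct and follows essentially the same route as the paper: it reduces everything to the known joint law of $(\tilde{t},\beta)$ for the test and training samples (central $F$/beta under $H_0$ and for matched training data, noncentral with parameters $\mbox{SNR}$, $\mbox{SNR}_p\,\beta\cos^2\theta$ and $\mbox{SNR}_p\sin^2\theta$ under $H_1$), observes that the feature vector is a deterministic function of these statistics, and concludes via Proposition~1. The only part you leave as a sketch is the explicit closed form of $p_0$ and $p_1$, which the appendix obtains by rewriting each distance inequality as a quadratic in $\tilde{t}^0$ (resp.\ $\tilde{t}^1$) conditioned on $\beta^0$ (resp.\ $\beta^1$), expressing the conditional probability through the $F$-type CDF evaluated at the roots, and integrating over the beta density — exactly the ``integration over regions bounded by polynomial inequalities'' you anticipate.
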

\begin{proof}
See Appendix A.
\end{proof}

Proposition 2 shows that, although the proposed design is quite different from the traditional approach to radar detection, and specifically that based on the GLRT or other statistical hypothesis testing tools,  it is still possible to obtain a receiver with desirable properties: in particular, the CFAR property allows one to obtain the same $P_{fa}$ irrespective of the unknown noise statistics, which is very important in practical applications. Moreover, the performance in terms of $P_d$ depends on the classical parameters SNR and cosine squared of the angle between the nominal and actual steering vector, as in most well-known CFAR detectors (Kelly's detector, AMF, ACE, etc.).

An additional advantage of the proposed KNN-based approach is that, by acting on the choice of the statistics in the feature vector and on their relative weights given by the diagonal matrix $\bD$, it is possible to obtain  a detector that is either more robust or more selective than Kelly's detector. In particular, we illustrate in the following the performance of a detector whose feature vector is composed by the Kelly's detector and AMF, i.e.,
\be
\bx = \left[d_1\tilde{t} \ \ d_2 \frac{\tilde{t}}{\beta} \right]^T \label{eq:CFARfeat}
\ee
where eqs. \eqref{eq:AMF}, \eqref{eq:Kelly}, and \eqref{eq:beta} have been used.

In this case, we aim at obtaining a detector that is more robust than Kelly's detector, but with limited $P_d$ loss under matched conditions. 
Without loss of generality, we set $d_1=1$ so leaving $d_2$ as the sole tunable parameter. Simulations are conducted as for the previous case of Sec. \ref{sec:first}, but for $N=16$ and $K_S=32$; moreover, we set $d_2=0.7$.

\begin{figure}
\centering
\includegraphics[width=0.7\textwidth]{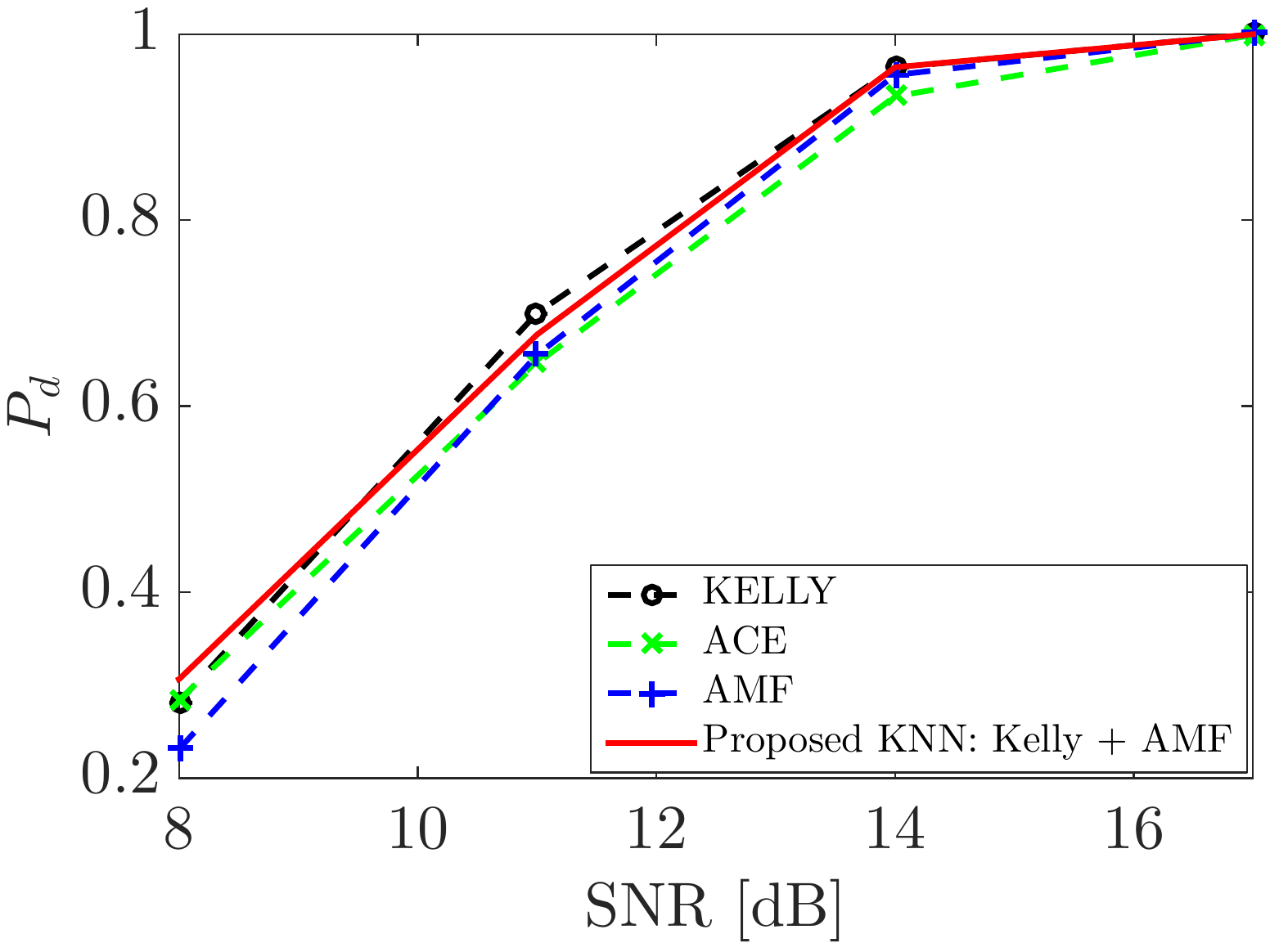} 
\caption{$P_d$ vs SNR under matched conditions for
a KNN fed by eq. (\ref{eq:CFARfeat}).}
\label{Fig1}
\end{figure}

\begin{figure}
\centering
\includegraphics[width=0.7\textwidth]{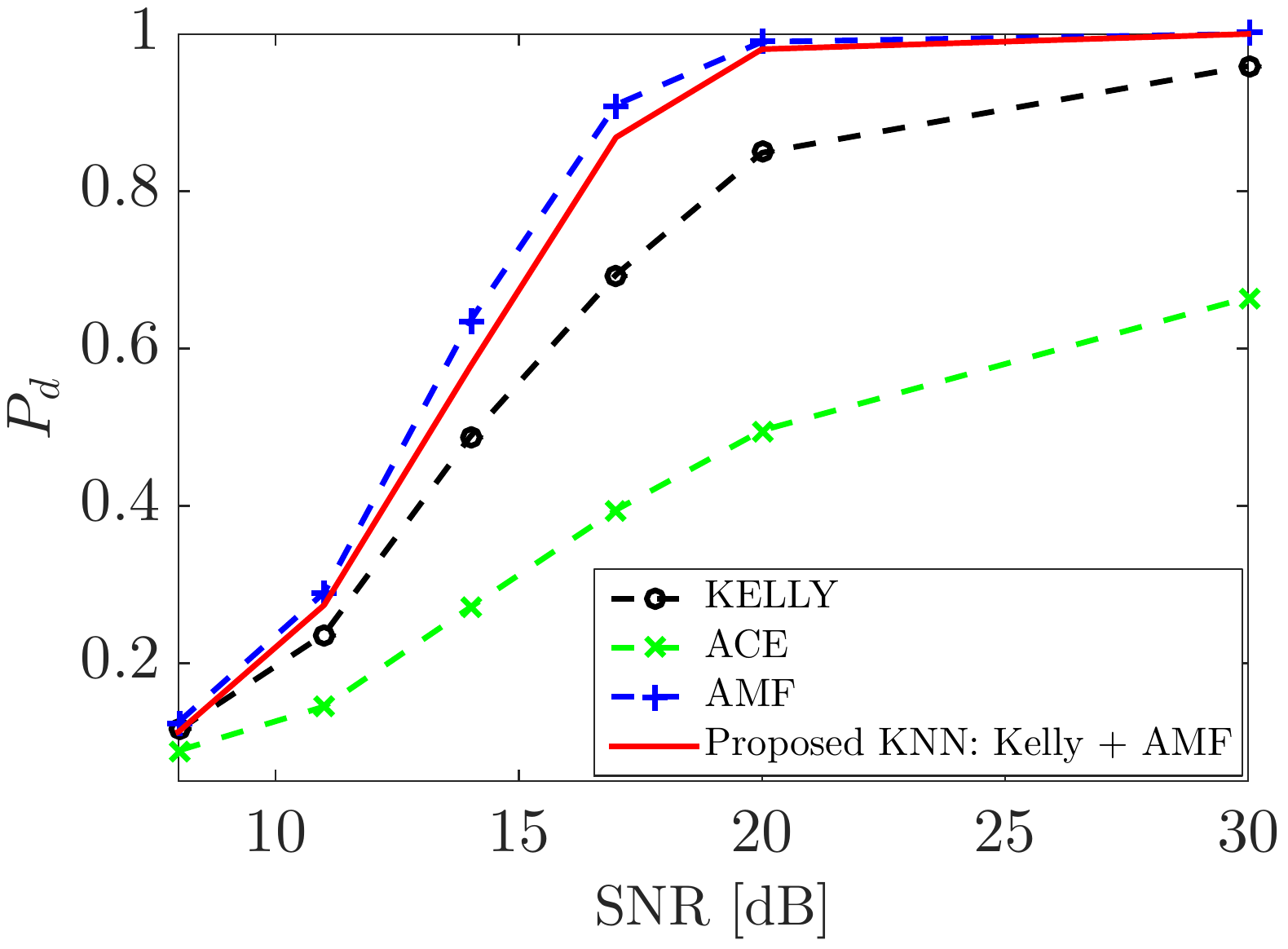} 
\caption{$P_d$ vs SNR under mismatched conditions with $\cos^2 \theta = 0.46$, for a KNN fed by eq. (\ref{eq:CFARfeat}).}
\label{Fig2}
\end{figure}

Fig. \ref{Fig1} shows the results under matched conditions.
It is apparent that the KNN detector has practically the same $P_d$ of Kelly's detector, followed by the AMF (which experiences some loss, especially at low SNR) and ACE (which experiences a loss of  about 3 dB at $P_d=0.9$).
Fig. \ref{Fig2} shows instead the results under mismatched conditions, now corresponding to a cosine squared $\cos^2 \theta = 0.46$: the KNN detector tends to behave as the AMF and, hence, it is more robust than Kelly's detector, without losing the CFAR property. This is a remarkable result, since classical CFAR  detectors usually trade-off robustness for some $P_d$ loss under matched conditions.

\begin{figure}
\centering
\includegraphics[width=0.7\textwidth]{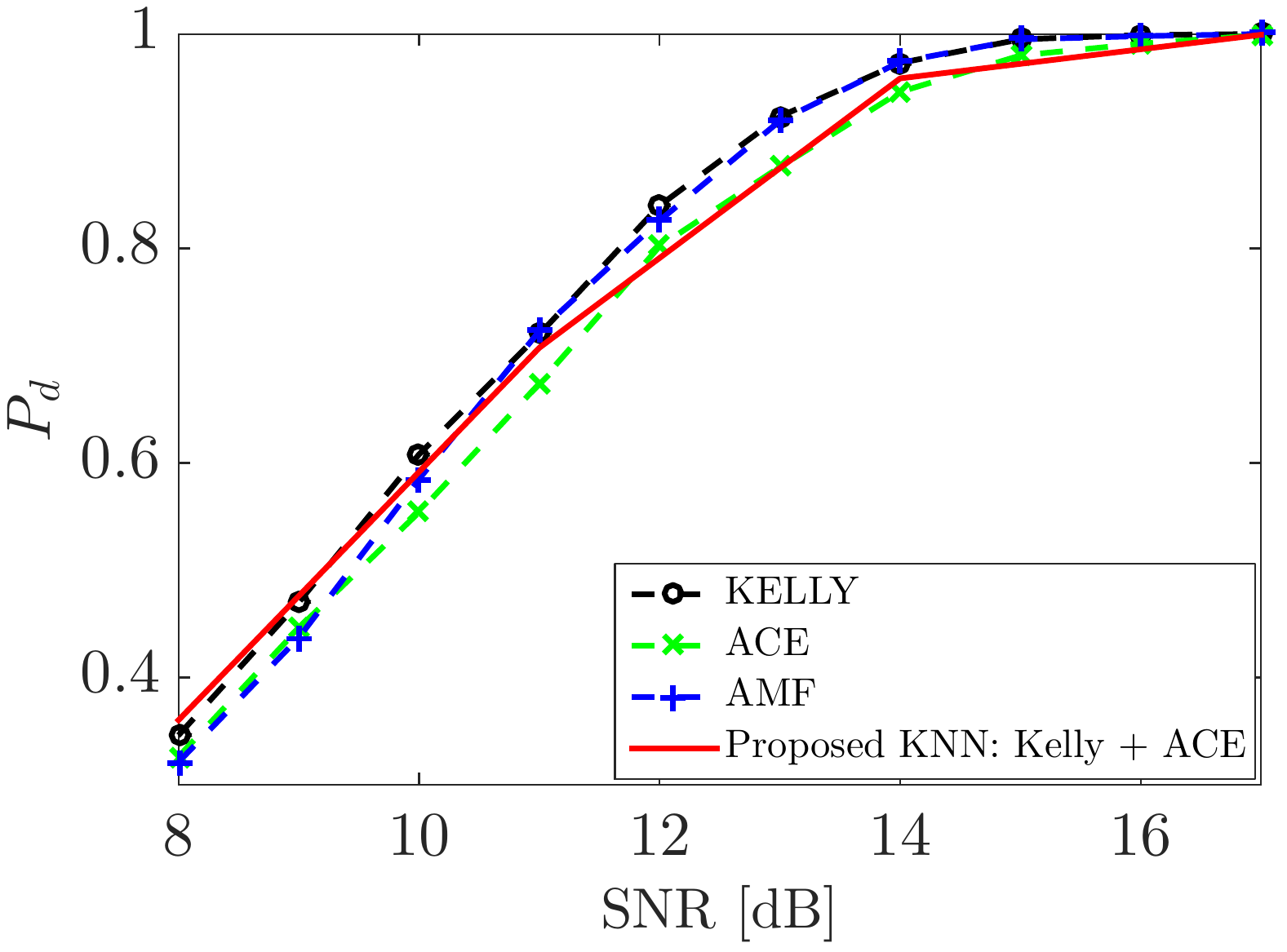} 
\caption{$P_d$ vs SNR under matched conditions
for a KNN fed by eq. (\ref{eq:CFARfeat1}).}
\label{Fig5}
\end{figure}

\begin{figure}
\centering
\includegraphics[width=0.7\textwidth]{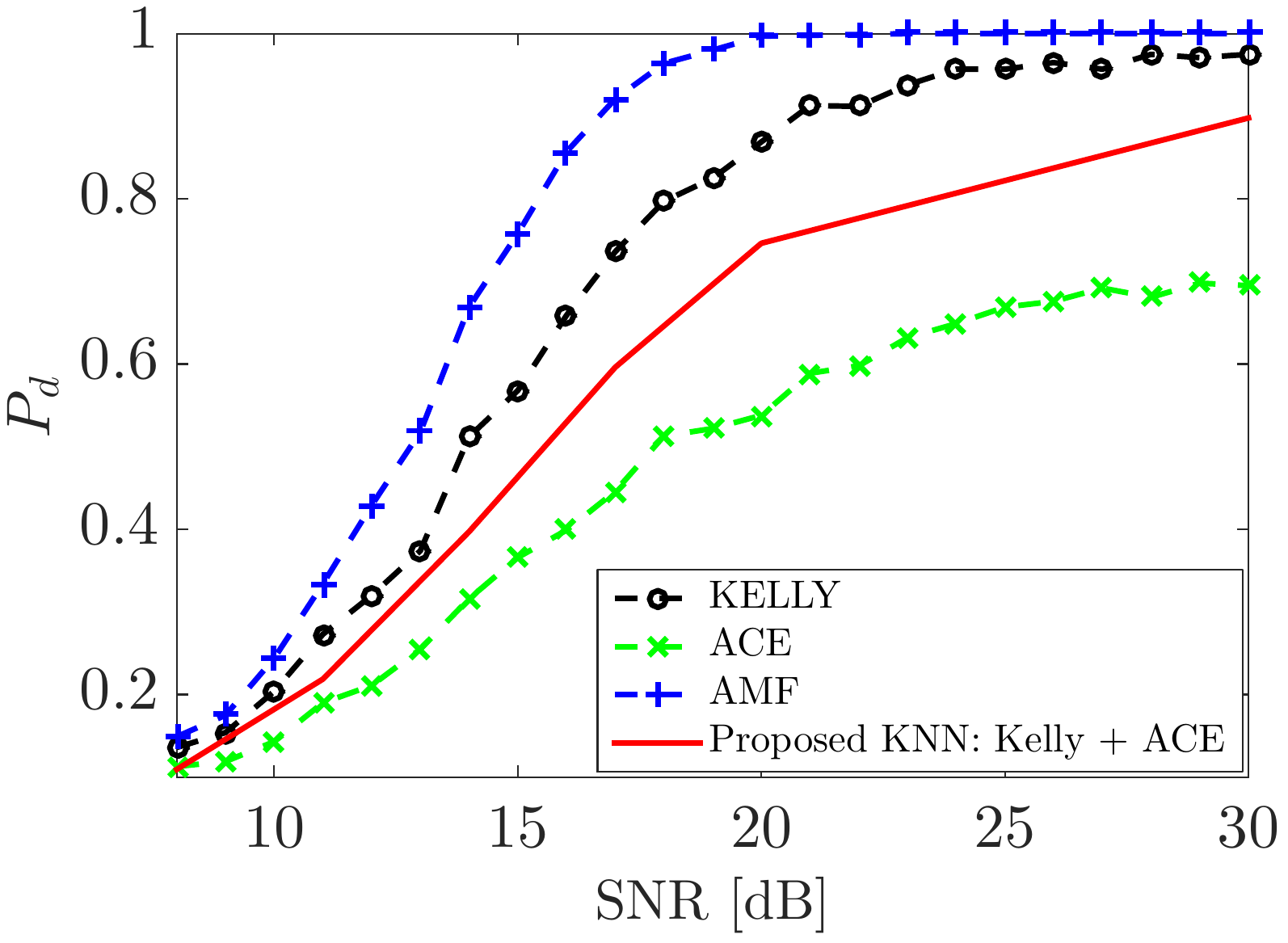} 
\caption{$P_d$ vs SNR under mismatched conditions with $\cos^2 \theta = 0.46$, for a KNN fed by eq. (\ref{eq:CFARfeat1}).}
\label{Fig6}
\end{figure}

As a further example, we illustrate in the following the performance of a detector whose feature vector is composed by the Kelly's detector and ACE, namely \cite{BOR-MorganClaypool}
\be
\bx = \left[d_1\tilde{t} \ \ d_2 \frac{\tilde{t}}{1-\beta} \right]^T. \label{eq:CFARfeat1}
\ee
In this case, the aim is to obtain a detector with intermediate performance between such two receivers, and again without loss of generality, we consider $d_1=1$ while we set $d_2=0.8$; the remaining parameters are as before. Fig. \ref{Fig5} shows that the KNN-based detector has indeed intermediate performance between Kelly's detector and ACE, being almost coincident with the former until an SNR of about 11 dB, while for higher SNR it experiences the same $P_d$ loss of the latter. Also the performance under mismatched conditions are intermediate between the two, as shown in Fig. \ref{Fig6}, thus yielding as a whole a novel detector with  a behavior different from the classical receivers. This is more  generally the advantage of the proposed approach, in which further compositions of the feature vector and/or tuning of the parameters can be investigated to design detectors with innovative characteristics.

\section{Conclusion}\label{sec:conclusion}
A novel  approach to the design of radar detectors, based on the KNN algorithm, has been investigated. $P_{fa}$ and $P_d$ have been analytically characterized in closed-form  for the general case.
This result has been then used to prove the CFAR property and to identify the relevant performance parameters of a KNN-based detector having as feature vector an arbitrary number of well-known radar detection statistics.
Some design examples have been reported: for instance, it is possible to obtain a detector that is as powerful as Kelly's detector (under matched conditions) while being more robust (under mismatched conditions) by considering only a two-dimensional feature vector; also, a  receiver with intermediate selectivity can be obtained by changing one of such features. 
A further design example has been reported to illustrate the effectiveness of the idea: in particular, using raw data as feature vector, it is possible to obtain a receiver that is more powerful than Kelly's detector, though it does not possess the CFAR property. 
As a whole, such results reveal that novel types of radar detectors can be designed through the proposed approach, and several possibilities remain to be investigated, potentially delivering solutions that cannot be obtained through conventional design tools for radar detection.
 
\section*{Appendix A}
\renewcommand{\theequation}{A.\arabic{equation}}

\section*{Proof of proposition 2}

In proposition 1,
$P(\overline{\ell} > T)$ and, hence, $P_{fa}$
if $\bx$ 
comes from the $H_0$ hypothesis,
is expressed in terms of $p_0 \left( \bx, \bm{x}^0_{1:k-M} \right)$ and $p_1 \left( \bx, \bm{x}^1_{1:N_T-M} \right)$.

We will specialize the formulas for $p_0$ and $p_1$
by considering
a feature vector of the form \eqref{eq::vect_struct}.
To this end, it will be immediately apparent that the involved RVs
are the independent random quantities
$(\tilde{t}$, $\beta)$, $(\tilde{t}^0,\beta^0)$, $(\tilde{t}^1,\beta^1)$, $(\tilde{t}^0_i,\beta^0_i)$s, and $(\tilde{t}^1_i,\beta^1_i)$s.
To this end, we also notice that \cite{BOR-MorganClaypool}
\begin{itemize}
\item
$\tilde{t}$ (under $H_0$), $\tilde{t}^0$ and $\tilde{t}_{ i}^0$s
are RVs ruled by the complex central $F$-distribution with 1 and $K-N+1$ complex degrees of freedom and are independent of $\beta$, $\beta^0$ and $\beta_{i}^0$s, respectively;
we write, $\tilde{t}, \tilde{t}^0, \tilde{t}_{i}^0 \sim  \cC\cF_{1,K-N+1}$;
\item
$\beta$, $\beta^0$ and $\beta_{i}^0$s are complex central beta RVs with  $K-N+2$ and $N-1$ complex degrees of freedom, in symbols $\beta, \beta^0, \beta_{i}^0 \sim\cC\beta_{K-N+2,N-1}$;
\item
$\tilde{t}_{ i}^1$ given  $\beta_{i}^1$ and $\tilde{t}^1$ given $\beta^1$ are ruled by a complex non-central
$F$-distribution with 1 and $K-N+1$ complex degrees of freedom
with non-centrality parameter $\delta$, i.e.,
$\tilde{t}_{i}^1,\tilde{t}^1 \sim \cC\cF_{1,K-N+1}(\delta)$ with $\delta^2=\mbox{SNR} \beta_{i}^1$ or $\delta^2=\mbox{SNR} \beta^1$, respectively;
\item
$\beta_{i}^1$ and $\beta^1$ are complex central beta RVs
with $K-N+2$ and $N-1$ complex degrees of freedom, i.e., $\beta_{i}^1, \beta^1 \sim \cC\beta_{K-N+2,N-1}$.
\end{itemize}


As a matter of fact, we have that
\begin{align*}
& p_0 \left( \bx, \bm{x}^0_{1:k-M} \right) 
 =
P \left( \left\{\|\bx^0-\bx\| \geq \max_{r \in \{1, \ldots, k-M \}} \|\bx_r^0-\bx\| \right\} \Big| \bx, \bm{x}^0_{1:k-M} \right) 
\\ &=
P \! \left( \! \left\{ \!  \sqrt{\sum_{j=1}^m d^2_j \left(\tilde{t}^0b^0[j] - \tilde{t}b[j]  \right)^2}
\geq \!\!\!\!
\max_{r \in \{1, \ldots, k-M \}} \! \|\bx_r^0-\bx\| \! \right\} \!\! \Big| \bx, \bm{x}^0_{1:k-M} \right)
\\ &=
P \left(  \left\{ \sum_{j=1}^m d^2_j \left(\tilde{t}^0b^0[j] - \tilde{t}b[j]  \right)^2 
\geq
c^2 \left( \bx, \bm{x}^0_{1:k-M}\right) \right\} \Big| \bx, \bm{x}^0_{1:k-M} \right).
\end{align*}
Moreover, the inequality
$$
\sum_{j=1}^m d^2_j \left(\tilde{t}^0b^0[j] - \tilde{t}b[j]  \right)^2
\geq
c^2 \left( \bx, \bm{x}^0_{1:k-M}\right)
$$
can be re-written as
$$
\gamma^0_1 (\tilde{t}^0)^2 - 2\tilde{t}\gamma^0_2\tilde{t}^0 + \gamma^0_3\tilde{t}^2 - c^2\left( \bx, \bm{x}^0_{1:k-M}\right) \geq 0
$$
where $\gamma^0_1$, $\gamma^0_2$, and $\gamma^0_3$ are proper coefficients that incorporate all the elements resulting from the square and product operations. Given $\beta^0$, the above inequality can be solved for
$\tilde{t}^0$ as
$$
\tilde{t}^0
\in
\left\{
\begin{array}{lll}
(-\infty,r_1) \cup  (r_2,+\infty), & \mbox{if} & \beta^0: \Delta_0 >0 \\
\R, & \mbox{if} & \beta^0: \Delta_0 <0
\end{array}
\right.
$$
with $\Delta_0$ the discriminant of the corresponding quadratic equation and $r_1,r_2$, $r_1<r_2$ its real and distinct roots (assuming $\Delta_0 >0$). Obviously, $r_1,r_2$ are continuous functions of $\bx, \bm{x}^0_{1:k-M}$, and 
$\beta^0$.
Notice also that the equation always admits a root whose real part is positive; it follows that
we have to consider the cases
$r_1<0 < r_2$ and
$0< r_1< r_2$. Thus, letting $I_{n_0}\left( \bx, \bm{x}^0_{1:k-M}\right)=\{ \beta^0 \in (0,1): \Delta_0 >0, r_1<0 \}$,
$I_{p_0}\left( \bx, \bm{x}^0_{1:k-M} \right)=\{ \beta^0 \in (0,1): \Delta_0 >0, r_1>0 \}$, and exploiting the fact that $\tilde{t}^0$ is a nonnegative random variable,
we have that
\begin{align*}
 p_0 \left( \bx, \bm{x}^0_{1:k-M}\right) \!&=\!
\int_{I_{n_0}} \!\!\!\!
P \left(  \tilde{t}^0 >r_2   | \bx, \bm{x}^0_{1:k-M}, \beta^0 \right)
f_{\beta^0}(\beta^0) d \beta^0
\\ &+
\int_{I_{p_0}} \!\!\!\!
P \left(  \tilde{t}^0 \in (0,r_1)\cup (r_2,+\infty)   | \bx, \bm{x}^0_{1:k-M}, \beta^0 \right)
f_{\beta^0}(\beta^0) d \beta^0
\\ &+
\int_{(0,1)\setminus \{I_{n_0} \cup I_{p_0}\}} \!\!\!\!
f_{\beta^0}(\beta^0) d \beta^0
\\ &=
\int_{I_{n_0}} 
\left( 1-F_{\tilde{t}^0}(r_2) \right)
f_{\beta^0}(\beta^0) d \beta^0
\\ &+
\int_{I_{p_0}}
\left(
F_{\tilde{t}^0}(r_1) +1 - F_{\tilde{t}^0}(r_2)
\right)
f_{\beta^0}(\beta^0) d \beta^0
\\ &+
\int_{(0,1)\setminus \{I_{n_0} \cup I_{p_0}\}}
f_{\beta^0}(\beta^0) d \beta^0
\end{align*}
where 
$F_{\tilde{t}^0}$ is the CDF of $\tilde{t}^0$
while $f_{\beta^0}$ is the PDF of $\beta^0$.

Similarly, we have that
\begin{align*}
 & p_1 \left( \bx, \bm{x}^1_{1:N_T-M} \right) 
 =
P \left( \left\{ \|\bx^1 - \bx\| \leq \min_{r \in \{1, \ldots, N_T-M \}} \|\bx_r^1,\bx\| \right\} | \bx, \bm{x}^1_{1:N_T-M} \right) 
\\ &=
P \left( \left\{ \sqrt{\sum_{j=1}^m d^2_j \left(\tilde{t}^1b^1[j] - \tilde{t}b[j] \right)^2}
\leq
a^2 \left( \bx, \bm{x}^1_{1:N_T-M}\right) \right\}| \bx, \bm{x}^1_{1:N_T-M} \right) 
\\ &=
P \left( \left\{ \sum_{j=1}^m d^2_j \left(\tilde{t}^1b^1[j] - \tilde{t}b[j] \right)^2
\leq
a^2 \left( \bx, \bm{x}^1_{1:N_T-M}\right) \right\} | \bx, \bm{x}^1_{1:N_T-M} \right).
\end{align*}
Again, the inequality
$$
\sum_{j=1}^m d^2_j \left(\tilde{t}^1b^1[j] - \tilde{t}b[j] \right)^2
\leq
a^2 \left( \bx, \bm{x}^1_{1:N_T-M}\right) 
$$
can be re-written
as
$$
\gamma^1_1
\left( \tilde{t}^1 \right)^2
-2 \tilde{t} \gamma^1_2 \tilde{t}^1
+ \gamma^1_3 - a^2 \left( \bx, \bm{x}^1_{1:N_T-M}\right) \leq 0,
$$
with $\gamma^1_1$, $\gamma^1_2$ and $\gamma^1_3$  proper coefficients. Given $\beta^1$,
the above inequality can be solved for $\tilde{t}^1$ as
$$
\tilde{t}^1
\in
\left\{
\begin{array}{lll}
(r_3, r_4), & \mbox{if} & \beta^1: \Delta_1 >0 \\
\emptyset, & \mbox{if} & \beta^1: \Delta_1 <0
\end{array}
\right.
$$
with $\Delta_1$ the discriminant of the corresponding quadratic equation and $r_3,r_4$, $r_3<r_4$ its real and distinct roots (assuming $\Delta_1 >0$). $r_3,r_4$ are continuous functions of $\bx, \bm{x}^1_{1:N_T-M}, \beta^1$.
Again, the equation always admits a root whose real part is positive; therefore, we should consider the cases
$r_3<0 < r_4$ and
$0< r_3< r_4$. Thus, letting $I_{n_1}\left( \bx, \bm{x}^1_{1:N_T-M} \right)=\{ \beta^1 \in (0,1): \Delta_1 >0, r_3<0 \}$,
$I_{p_1}\left( \bx, \bm{x}^1_{1:N_T-M} \right)=\{ \beta^1 \in (0,1): \Delta_1 >0, r_3>0 \}$, and exploiting the fact that $\tilde{t}^1$ is a nonnegative random variable,
it follows that
\begin{align*}
p_1 \left( \bx, \bm{x}^1_{1:N_T-M} \right) &=
\int_{I_{n_1}}
P \left( \tilde{t}^1 \in (0, r_4)   | \bx,\bm{x}^1_{1:N_T-M}, \beta^1 \right)
f_{\beta^1}(\beta^1) 
d \beta^1
\\ &+
\int_{I_{p_1}}
P \left(  \tilde{t}^1 \in (r_3, r_4)  | \bx, \bm{x}^1_{1:N_T-M}, \beta^1 \right)
f_{\beta^1}(\beta^1) 
d \beta^1
\\ &=
\int_{I_{n_1}}
F_{\tilde{t}^1|\beta^1}(r_4) 
f_{\beta^1}(\beta^1) 
d \beta^1
\\ &+
\int_{I_{p_1}}
\left( F_{\tilde{t}^1|\beta^1}(r_4) - F_{\tilde{t}^1|\beta^1}(r_3) \right)
f_{\beta^1}(\beta^1) 
d \beta^1
\end{align*}
where
$F_{\tilde{t}^1|\beta^1}$ is the CDF of $\tilde{t}^1$ given $\beta^1$,
and $f_{\beta^1}$ is the PDF of $\beta^1$. This concludes the proof of the expressions given in the statement.

Following the lead of previous reasoning, it is also possible to determine the parameters $P_d$ depends on, also under mismatched conditions (i.e., a steering vector not aligned with the nominal one).
In fact, 
to compute $P_d$ we
suppose that $\bx$ comes from a 
$\bz$
containing signal plus noise and,
in particular
$$
\bz = \alpha \bp + \bn
$$
with $\bp$ not necessarily aligned with $\bv$.
It follows that
\begin{eqnarray*}
P_d &=&
1-
\left(
\begin{array}{c}
N_T \\ k-M
\end{array}
\right)
\left(
\begin{array}{c}
N_T \\ N_T-M
\end{array}
\right)
\\ &\times&
E_{ \by }\left[ I_{\cal Y}(\by)
\left( p_0 \left( \bx, \bx_{1:k-M}^0 \right) \right)^{N_T-k+M}
\left( p_1 \left( \bx, \bx_{1:N_T-M}^1 \right) \right)^{M}
\right]
\end{eqnarray*}
where
again
the $(\tilde{t}$, $\beta)$, $(\tilde{t}^0,\beta^0)$, $(\tilde{t}^1,\beta^1)$, $(\tilde{t}^0_i,\beta^0_i)$s, and $(\tilde{t}^1_i,\beta^1_i)$s are independent random quantities. However, this time
\cite{BOR-MorganClaypool}
\begin{itemize}
\item
$\tilde{t}^0, \tilde{t}_{ i}^0 \sim  \cC\cF_{1,K-N+1}$ and
$\beta^0, \beta_{i}^0 \sim\cC\beta_{K-N+2,N-1}$;
\item
given $\beta^1$ or $\beta_{i}^1$, $\tilde{t}^1, \tilde{t}_i^1 \sim \cC\cF_{1,K-N+1}(\delta)$
with non-centrality parameter $\delta$, where $\delta^2=\mbox{SNR} \beta^1$ or $\delta^2=\mbox{SNR} \beta_i^1$, respectively, while
$\beta^1, \beta_{i}^1 \sim \cC\beta_{K-N+2,N-1}$;
\item
given $\beta$, $\tilde{t} \sim \cC\cF_{1,K-N+1}(\delta)$ 
where $\delta$ is the non-centrality parameter, with $\delta^2=\mbox{SNR}_p \beta \cos^2 \theta$ and 
$
\mbox{SNR}_p$, $
\cos^2 \theta$ defined in the  statement;
\item
$\beta$ is a complex non-central beta RV with $K-N+2$ and $N-1$ complex degrees of freedom
with non-centrality parameter $\delta$, $\delta^2=\mbox{SNR}_p \sin^2\theta$,
in symbols $\beta \sim \cC\beta_{K-N+2,N-1}(\delta)$.
\end{itemize}

\section*{References}

\end{document}